\newtheorem{lemma}{Lemma}
\begin{document}
	\title{\huge A Successive Optimization Approach to Pilot Design for Multi-Cell Massive MIMO Systems}
	\author{Hayder Al-Salihi, Trinh Van Chien, Tuan Anh Le, and Mohammad Reza Nakhai
		\thanks{H. Al-Salihi and M. R. Nakhai are with the Department of Informatics, King's College London, London, U. K. T. V. Chien is with the Department of Electrical
		Engineering (ISY), Link\"{o}ping University, 581 83 Link\"{o}ping,
		Sweden. T. A. Le is with the Department of Design Engineering \& Mathematics, Middlesex University, London, U. K. (email: \{ana.al-salihi; reza.nakhai\}@kcl.ac.uk; trinh.van.chien@liu.se; t.le@mdx.ac.uk). The work of H. Al-Salihi was supported by Iraqi Higher Committee of Educational Development (HCED). The work of T. V. Chien was supported by the European Union’s Horizon 2020 research  and innovation programme under grant agreement No 641985 (5Gwireless), and also by ELLIIT and CENIIT.}
	}

	\maketitle

	\begin{abstract}
		In this letter, we introduce a novel pilot design approach that minimizes the total mean square errors of the minimum mean square error estimators of all base stations (BSs) subject to the transmit power constraints of individual users in the network, while tackling the pilot contamination in multi-cell Massive MIMO systems. First, we decompose the original non-convex problem into distributed optimization sub-problems at individual BSs, where each BS can optimize its own pilot signals given the knowledge of pilot signals from the remaining BSs. We then introduce a successive optimization approach to transform each optimization sub-problem into a linear matrix inequality (LMI) form, which is convex and can be solved by available optimization packages. Simulation results confirm the fast convergence of the proposed approach and prevails a benchmark scheme in terms of providing higher accuracy.	\end{abstract}
	\IEEEpeerreviewmaketitle
	\section{Introduction}
	In multi-cell Massive MIMO systems, each base station (BS) requires accurate knowledge of the channel state information (CSI) obtained during the pilot training phase. To attain accurate channel estimates, perfectly orthogonal pilot allocations to users are required. Unfortunately, this requirement is impractical, since the pilot overhead has to be proportional to the number of users in the entire system. Furthermore, the channel coherence block limits the number of orthogonal pilots \cite{Bjornson2016b}. Thus, pilot signals need to be reused over cells, causing spatially correlated interference, known as pilot contamination that degrades the performance of a Massive MIMO system \cite{Bjornson2016b}.

	In order to address the pilot contamination problem, the authors of \cite{Zhang2} proposed a superimposed channel estimation approach by adding a low power  pilot signal to the data signal at the transmitter. The superimposed signal is then utilized at the receiver for channel estimation. However, a proportion of the power allocated to the pilot signal is wasted. Fortunately, it has been shown in \cite{Ma} that the wasted-power problem can be theoretically mitigated with properly designed forward-error-correction codes. On the other attempts, pilot assignment and pilot power control are alternative solutions which can attain great improvements for the case that the system only has a finite set of orthogonal pilot signals \cite{PWang,Xu2015a, Chien2018a}. Involving reuse factor in pilot design may lead to a combinatorial pilot assignment problem in many pilot designs and, hence, result in an exponentially increased  computational complexity \cite{Mav1}.

In this letter, we consider a multi-cell Massive MIMO system adopting minimum mean square error (MMSE) estimators at BSs. We derive the mean square error (MSE) of the adopted MMSE estimator as a widely used accuracy criteria for estimation. We, then, formulate an optimization problem to find optimal pilot signals that minimize the total derived MSE of the MMSE estimators of all BSs in the network subject to a transmit power constraint at each user. The proposed formulation is non-convex with respect to the pilot matrices.
	To overcome non-convexity, we, first, decompose the proposed optimization problem into distributed subproblems at BSs, where each BS in the network optimizes its own pilot signal, given the knowledge of the pilot signals of other BSs. We then introduce a successive optimization approach to transform each subproblem into a linear matrix inequality (LMI) problem which is convex and can be effectively solved by available optimization packages, e.g., CVX \cite{Boyd}. Finally, we analyse the complexity of the transformed LMI optimization problem.
	
	\emph{\textbf{Notation}:}
	Bold lower/upper case letters are used for vectors/matrices; $\left\|\cdot\right\|_F$ and $\left\|\cdot\right\|$ stand for the Frobenius norm and the Euclidean norm; $(\cdot)^T$ and $(\cdot)^H$ is
	the regular and complex conjugate transpose operator, respectively; $\textrm{Tr}\left(\cdot
	\right)$ is the trace of a matrix;  $\mathbf{X}\succeq \mathbf{0}$ is the positive semidefinite condition;  $
	\mathbf{I}_a$ is an $a \times a$ identity matrix; $\textrm{diag}\{\mathbf{x}\}$ is a diagonal matrix which the diagonal entries are elements of the vector $\mathbf{x}$; $\mathcal{CN}(\cdot,\cdot)$ is a circularly symmetric
	complex Gaussian distribution; $\mathbb{E}[\cdot]$ is the expectation of a random variable;  $\mathcal{O}(\cdot)$ is the big-O notation.
	
	\section{System Model}
	Consider a multi-cell massive MIMO system with $C$ cells operating in a time-division duplexing  mode. Each cell comprises of an $M$-antenna BS and $N$ single-antenna users. The propagation factor between the $i$-th antenna of the BS in cell~$c^{*}$, denoted as BS~$c^{*}$, and user $n$ in cell $ c$ is $\sqrt{\phi^n_{c,c^*}} h^n_{c,c^*,i}$, where $\phi^n_{c,c^*}$ is the large scale fading coefficient modeling the path-loss and shadowing, while $h_{c,c^*,i}^n \sim \mathcal{CN}(0,1) $ is small-scale fading.
	
	In the pilot training phase, all users in each cell synchronously send their pilot signals.
	Let $\mathbf{x}^n_{c} \in \mathbb{C}^{\tau \times 1}$ be the pilot signal used by user $n$ in cell $c$ and $\| \mathbf{x}^n_{c} \|^2 \leq P_{\max,c}, \forall c$,	where $\tau$ is the length of the pilot signal, and $P_{\max,c}$ is the maximum allocated power level by each user in cell $c$ to its pilot signal. The received baseband training signal $\mathbf{y}_{c^*,i} \in \mathbb{C}^{\tau \times 1}$ at the $i$-th antenna element of the BS~$c^{*}$ can be expressed as:	
	\begin{eqnarray} \label{e1}
	\mathbf{y}_{c^*,i}&=&  \sum_{c=1}^{C} \sum_{n=1}^{N} \sqrt{  {{\phi}}^n_{c,c^*}}  {h}^n_{c,c^*,i} \mathbf{x}^n_{c} +\mathbf{v}_{c^*,i},
	\end{eqnarray}
	where $\mathbf{v}_{c^*,i}$ is Gaussian noise with $\mathbf{v}_{c^*,i}\sim \mathcal{CN}(\mathbf{0},\sigma^2\mathbf{I}_{\tau})$.
	Let the received signals, Gaussian noises, pilot signals by all antenna elements of  BS~$c^\ast$ and the corresponding large scale channel coefficients be denoted as
	
	\begin{align}
	&\mathbf{Y}_{c^*}  =[\mathbf{y}_{c^*,1},\ \mathbf{y}_{c^*,2}, \ldots, \mathbf{y}_{c^*,M}] \in \mathbb{C}^{\tau \times M},\label{e1v1} \\
	&\mathbf{V}_{c^*} =[\mathbf{v}_{c^*,1},\ \mathbf{v}_{c^*,2}, \ldots, \ \mathbf{v}_{c^*,M}] \in \mathbb{C}^{\tau \times M}, \label{e1v2} \\
	&\mathbf{X}_{c^*} =[\mathbf{x}^1_{c^*}, \ \mathbf{x}^2_{c^*}, \ldots,  \mathbf{x}^N_{c^*}]\in \mathbb{C}^{\tau \times N}, \label{e1v3}\\
	& \mathbf{D}_{c,c^*} =\textrm{diag}\{[{\phi}^1_{c,c^*}, \ \phi^2_{c,c^*}, \ldots, \phi^N_{c,c^*}]^T \}\in \mathbb{C}^{N \times N}. \label{e1v4}
	\end{align}
	Also, let the small-scale fading channel coefficients of all $N$ users in cell $c$ as seen by BS~$c^\ast$ be expressed as
	\begin{equation} \label{e1v5}
	\mathbf{H}_{c,c^*}=
	\begin{bmatrix}
	{h}^1_{c,c^*,1},& \ldots ,& {h}^1_{c,c^*,M} \\
	\vdots & \ddots &        ,        \vdots \\
	{h}^N_{c,c^*,1} & \ldots &, {h}^N_{c,c^*,M}
	\end{bmatrix} \in \mathbb{C}^{N \times M}.
	\end{equation}
	Then, using \eqref{e1} $-$ \eqref{e1v5}, one  can formulate the received training signals by all $M$ antenna elements of  BS $c^*$, according
	\begin{equation} \label{e2}
	\mathbf{Y}_{c^*} =  \mathbf{X}_{c^*}  {\mathbf{D}}^{\frac{1}{2}}_{c^*,c^*} \mathbf{H}_{c^*,c^*}  +  \sum_{c=1, c\neq{c^*}}^{C}  \mathbf{X}_{c}{\mathbf{D}}^{\frac{1}{2}}_{c,c^*}  \mathbf{H}_{c,c^*}  +\mathbf{V}_{c^*}.
	\end{equation}
	The first term in right hand side of \eqref{e2} involves desired channel coefficients and the remaining terms indicate the effects of mutual interference and Gaussian noise. The channel estimate $\widehat{\mathbf{H}}_{c^*,c^*}$ of the original channel $\mathbf{H}_{c^*,c^*}$ is computed by utilizing the MMSE estimation upon the observation of $\mathbf{Y}_{c^*}$ is \cite{Kailath}:
	\begin{equation} \label{e3}
	\widehat{\mathbf{H}}_{c^*,c^*} =  \mathbb{E} [\mathbf{H}_{c^*,c^*} \mathbf{Y}_{c^*}^H] \left( \mathbb{E}[\mathbf{Y}_{c^*}\mathbf{Y}_{c^*}^H]\right)^{-1} \mathbf{Y}_{c^*}.
	\end{equation}
	Plugging \eqref{e2} in \eqref{e3}, and after some mathematical manipulations, we obtain
	\begin{eqnarray}\label{e5}
	\widehat{\mathbf{H}}_{c^*,c^*} = M {\mathbf{D}}^{\frac{1}{2}}_{c^*,c^*} \mathbf{X}_{c^*}^H \mathbf{\Omega}_{c^\ast}^{-1}  \mathbf{Y}_{c^*},
	\end{eqnarray}
	where
	$\boldsymbol{\Omega}_{c^*}= M \sum_{c=1}^{C} \mathbf{X}_{c} \mathbf{D}_{c,c^*}\mathbf{X}_{c}^H + M \sigma^2 \mathbf{I}_\tau.$ From \eqref{e5}, the channel estimation quality depends on the pilot design and if $\tau < CN$, it also suffers from pilot contamination \cite{PWang,Xu2015a, Chien2018a}. Let  the channel estimation errors at BS~$c^\ast$ be denoted as
	\begin{equation}
	\mathbf{\Delta}_{c^\ast}={\mathbf{H}}_{c^*,c^*}- \widehat{\mathbf{H}}_{c^*,c^*},
	\end{equation}
	and the MSE be defined as
	\begin{eqnarray}
	\label{e6}
	\textrm{MSE}_{c^\ast} = \mathbb{E} \left[\|\mathbf{\Delta}_{c^\ast}\|^2_F\right]
	= \mathbb{E} \left[\textrm{Tr}\left(\mathbf{\Delta}_{c^\ast} \mathbf{\Delta}_{c^\ast}^H\right) \right].
	\end{eqnarray}
	Then, using \eqref{e2}, \eqref{e3}, \eqref{e5},  and after some mathematical manipulations, one can rewrite $\textrm{MSE}$ in \eqref{e6} as:
	\begin{eqnarray} \label{e7}
	M\textrm{Tr}\left(\mathbf{A}^{-1}-\mathbf{A}^{-1} \mathbf{B} (\mathbf{C}^{-1}+\mathbf{D A}^{-1}\mathbf{B})^{-1}\mathbf{D A}^{-1}
	\right),
	\end{eqnarray}
	where $\mathbf{A}^{-1}=\mathbf{I}_N$, $\mathbf{B}=M \mathbf{D}^{\frac{1}{2}}_{c^*,c^*} \mathbf{X}_{c^*}^H$, $\mathbf{D}=\mathbf{X}_{c^*} \mathbf{D}^{\frac{1}{2}}_{c^*,c^*}$, and $\mathbf{C}^{-1}=M \sum_{c=1, c\neq{c^*}}^{C} \mathbf{X}_{c} \mathbf{D}_{c,c^*}\mathbf{X}_{c}^H+M \sigma^2 \mathbf{I}_{\tau}$. By utilizing the Sherman-Morrison-Woodbury identity
	\[	(\mathbf{A}+\mathbf{BCD})^{-1}=\mathbf{A}^{-1}-\mathbf{A}^{-1} \mathbf{B} (\mathbf{C}^{-1}+\mathbf{D A}^{-1}\mathbf{B})^{-1}\mathbf{D A}^{-1}
	\]
	and defining $\textrm{MSE}_{c^\ast} \triangleq f_{c^*}\left( \mathbf{X}_{c^*}\right)$,  one can reformulate \eqref{e6} as 
	\begin{eqnarray} \label{e9}
	f_{c^*}\left( \mathbf{X}_{c^*}\right)=M\textrm{Tr}\left(\left(\mathbf{I}_N+ \mathbf{D}^{\frac{1}{2}}_{c^*,c^*}\mathbf{X}_{c^*}^H \mathbf{F}^{-1}_{c^*} \mathbf{X}_{c^*} \mathbf{D}^{\frac{1}{2}}_{c^*,c^*} \right)^{-1}\right),
	\end{eqnarray}
	where $\mathbf{F}_{c^*}=\sum_{c=1, c\neq{c^*}}^{C} \mathbf{X}_{c} \mathbf{D}_{c,c^*}\mathbf{X}_{c}^H+ \sigma^2 \textbf{I}_\tau$.
	
	\section{A Successive Optimization Pilot Design}
	From \eqref{e9}, the performance of the MMSE estimation algorithm depends on the pilot structure. In this section, we develop an optimal pilot design to minimize the total channel estimation errors of all BSs in the network subject to the transmit power constraints at individual users. Hence, we introduce the following optimization problem for the network:
	\begin{equation}
	\begin{aligned}\label{prob_01}
	& \underset{ \{\mathbf{X}_{c^*} \} }{\textrm{minimize}} & &
	\sum_{c^\ast =1}^C  f_{c^*}\left( \mathbf{X}_{c^*}\right)\\
	& \mbox{subject to}\ & & \mathbf{X}_{c^*}^H \mathbf{X}_{c^*} \preceq  P_{\max, c^\ast} \mathbf{I}_N, \forall c^{\ast},
	\end{aligned}
	\end{equation}
	where  $\{\mathbf{X}_{c^{\ast}}\}=\{\mathbf{X}_1, \mathbf{X}_2,\cdots, \mathbf{X}_C\}$.
	Problem \eqref{prob_01} is non-convex due to its objective function. To tackle the problem, we first introduce an auxiliary variable $\mathbf{G}_{c^*}$, denote $\{\mathbf{G}_{c^{\ast}}\}=\{\mathbf{G}_1,\cdots, \mathbf{G}_C\}$, remove the constant $M$, and rewrite \eqref{prob_01} as\footnote{Problems \eqref{prob_01} and \eqref{prob_02} are equivalent since \eqref{prob_02} is an epigraph form of \eqref{prob_01} \cite[pp.134]{Boyd_convex}. In fact, introducing the auxiliary variable $\mathbf{G}_{c^*}$ transforms the objective function into a linear form while shifting the nonlinear part into a constraint.}
	\begin{equation}\label{prob_02}
	\begin{aligned}
	&&& \displaystyle \underset{ \{\mathbf{X}_{c^*}\},\{\mathbf{G}_{c^*} \}}{\textrm{minimize}} \quad
	\sum_{c^{\ast} =1}^C \textrm{Tr}\left( \mathbf{G}_{c^*}\right)\\
	&&& \mbox{subject to} \quad \mathbf{X}_{c^*}^H \mathbf{I}_{\tau}^{-1} \mathbf{X}_{c^*} \preceq  P_{\max, c^\ast}\mathbf{I}_N, \forall c^{\ast},\\
	&&&\left(\mathbf{I}_N+  \mathbf{D}^{\frac{1}{2}}_{c^*,c^*}\mathbf{X}_{c^*}^H \mathbf{F}^{-1}_{c^*} \mathbf{X}_{c^*} \mathbf{D}^{\frac{1}{2}}_{c^*,c^*}\right)^{-1} \preceq \mathbf{G}_{c^*}, \forall c^\ast.
	\end{aligned}
	\end{equation}
	Adopting the Schur complement \cite{Boyd_convex}, one can  equivalently reformulate problem \eqref{prob_02} as
	\begin{equation}
	\begin{aligned}\label{prob_04}
	&&& \displaystyle \underset{ \{\mathbf{X}_{c^*}\},\{\mathbf{G}_{c^*} \}}{\textrm{minimize}} \quad
	\sum_{c^\ast =1}^C \textrm{Tr}\left( \mathbf{G}_{c^*}\right)\\
	&&& \mbox{subject to} \quad \begin{bmatrix}
	P_{\max, c^\ast} \mathbf{I}_N &\mathbf{X}_{c^*}^H\\
	\mathbf{X}_{c^*}& \mathbf{I}_{\tau}
	\end{bmatrix}\succeq \mathbf{0}, \forall c^\ast,\\
	&&& \begin{bmatrix} \mathbf{G}_{c^*} & \mathbf{I}_N\\ \mathbf{I}_N & \mathbf{I}_N+  \mathbf{D}^{\frac{1}{2}}_{c^*,c^*}\mathbf{X}_{c^*}^H \mathbf{F}^{-1}_{c^*} \mathbf{X}_{c^*} \mathbf{D}^{\frac{1}{2}}_{c^*,c^*}
	\end{bmatrix} \succeq \mathbf{0}, \forall c^\ast.
	\end{aligned}
	\end{equation}
	
	The second set of constraints in \eqref{prob_04} is still non-convex due to the nonlinearity of the term $\mathbf{D}^{\frac{1}{2}}_{c^*,c^*}\mathbf{X}_{c^*}^H \mathbf{F}^{-1}_{c^*} \mathbf{X}_{c^*} \mathbf{D}^{\frac{1}{2}}_{c^*,c^*}$ with respect to optimization variable $\mathbf{X}_{c^*}$, $\forall c^*$, i.e., the optimization variable is in quadratic forms and appears in both numerator and denominator of the term. As a main contribution of this paper, we propose a distributed algorithm where every BS $c^\ast$ optimizes its own pilot signals given the knowledge of the pilot signals of the other cells in $\mathbf{F}^{-1}_{c^*}$ as follows:
	\begin{equation}
	\begin{aligned}\label{prob_05v1}
	&&& \displaystyle \underset{ \mathbf{X}_{c^*},
		\mathbf{G}_{c^*}}{\textrm{minimize}} \quad
	\textrm{Tr}\left(\mathbf{G}_{c^*} \right)\\
	&&& \text{subject to} \quad \begin{bmatrix}
	P_{\max, c^\ast} \mathbf{I}_N & \mathbf{X}_{c^*}^H \\
	\mathbf{X}_{c^*} & \mathbf{I}_{\tau}
	\end{bmatrix}\succeq \mathbf{0},\\
	&&& \begin{bmatrix} \mathbf{G}_{c^*} & \mathbf{I}_N\\ \mathbf{I}_N & \mathbf{I}_N+  \mathbf{D}^{\frac{1}{2}}_{c^*,c^*} \mathbf{X}_{c^*}^{H} \mathbf{F}^{-1}_{c^*}\mathbf{X}_{c^*} \mathbf{D}^{\frac{1}{2}}_{c^*,c^*}
	\end{bmatrix} \succeq \mathbf{0}.
	\end{aligned}
	\end{equation}
	
	Although the distributed optimization problem \eqref{prob_05v1} only considers $\mathbf{X}_{c^*}$ and	$\mathbf{G}_{c^*}$ as the optimization variables, its second constraint is still not in an LMI form with respect to $\mathbf{X}_{c^*}$. To proceed, we propose a successive optimization approach where, at the $t$-th iteration, BS $c^\ast$ updates its pilot signals by solving the following distributed optimization problem:
	\begin{equation}
	\begin{aligned}\label{prob_05}
	&&& \displaystyle \underset{ \mathbf{X}_{c^*}^{(t)},
		\mathbf{G}_{c^*}^{(t)}}{\textrm{minimize}} \quad
	\textrm{Tr}\left(\mathbf{G}_{c^*}^{(t)}\right)\\
	&&& \text{subject to} \quad \begin{bmatrix}
	P_{\max,c^\ast} \mathbf{I}_N & \mathbf{X}_{c^*}^{(t),H} \\
	\mathbf{X}_{c^*}^{(t)} & \mathbf{I}_{\tau}
	\end{bmatrix}\succeq \mathbf{0},\\
	&&& \begin{bmatrix} \mathbf{G}_{c^*}^{(t)} & \mathbf{I}_N\\ \mathbf{I}_N & \mathbf{I}_N+  \mathbf{D}^{\frac{1}{2}}_{c^*,c^*} \mathbf{X}_{c^*}^{(t),H} (\mathbf{F}^{-1}_{c^*})^{(t-1)} \mathbf{X}_{c^*}^{(t-1)} \mathbf{D}^{\frac{1}{2}}_{c^*,c^*}
	\end{bmatrix} \succeq \mathbf{0},
	\end{aligned}
	\end{equation}
	where $\mathbf{F}_{c^*}^{-1}$ from the previous iteration is
	\begin{equation}\label{eqtF}
	(\mathbf{F}_{c^*}^{-1})^{(t-1)}=\sum_{c=1, c\neq{c^*}}^{C} \mathbf{X}_{c}^{(t-1)} \mathbf{D}_{c,c^*} \mathbf{X}_{c}^{(t-1),H}+ \sigma^2\textbf{I}_\tau,
	\end{equation}
	$\mathbf{X}_{c^*}^{(t-1)}$ and $\mathbf{X}_{c}^{(t-1)}$ are the optimal pilots of cells $c^*$ and $c$, respectively, which are obtained from the $(t-1)$-th iteration.
	In order to transform the second constraint of \eqref{prob_05} into an LMI form with respect to both $\mathbf{X}_{c^{\ast}}^{(t)}$  and $\mathbf{G}_{c^*}^{(t)}$, we have used the known value of  $\mathbf{X}_{c^{\ast}}^{(t-1)}$. Notice that at  the stationary point attained after a sufficient number of iterations, the approximation
	\begin{equation} \label{eq:Assumption}
	\mathbf{X}_{c^{\ast}}^{(t)} \approx  \mathbf{X}_{c^{\ast}}^{(t-1)}, \forall c^{\ast},
	\end{equation}
	can be assured with any desired accuracy.	Note that, the matrix in the second constraint of \eqref{prob_05} is not Hermitian during the iterations, due to the mismatch between $\mathbf{X}_{c^{\ast}}^{(t)}$  and $\mathbf{X}_{c^{\ast}}^{(t-1)}, \forall c^{\ast}$. To guarantee a Hermitian matrix in the second constraint of \eqref{prob_05}, we introduce a new variable $\mathbf{A}_{c^\ast}^{(t)}$, such that
	\begin{equation}\label{newconstraint}
	\begin{split}
	2\mathbf{A}_{c^\ast}^{(t)} =&  \mathbf{D}^{\frac{1}{2}}_{c^*,c^*} \mathbf{X}_{c^*}^{(t),H} (\mathbf{F}^{-1}_{c^*})^{(t-1)} \mathbf{X}_{c^*}^{(t-1)} \mathbf{D}^{\frac{1}{2}}_{c^*,c^*}\\
	&+ \mathbf{D}^{\frac{1}{2}}_{c^*,c^*} \mathbf{X}_{c^*}^{(t-1),H} (\mathbf{F}^{-1}_{c^*})^{(t-1)} \mathbf{X}_{c^*}^{(t)} \mathbf{D}^{\frac{1}{2}}_{c^*,c^*}.
	\end{split}
	\end{equation}
	Finally, we reformulate \eqref{prob_05} as
	\begin{equation}
	\begin{aligned}\label{prob_06}
	& \displaystyle \underset{ \mathbf{X}_{c^*}^{(t)},
		\mathbf{G}_{c^*}^{(t)},\mathbf{A}_{c^*}^{(t)}}{\textrm{minimize}} &&
	\textrm{Tr}\left(\mathbf{G}_{c^*}^{(t)}\right)\\
	& \text{subject to} && \begin{bmatrix}
	P_{\max,c^\ast} \mathbf{I}_N & \mathbf{X}_{c^*}^{(t),H} \\
	\mathbf{X}_{c^*}^{(t)} & \mathbf{I}_{\tau}
	\end{bmatrix}\succeq \mathbf{0},\\
	&&& \begin{bmatrix} \mathbf{G}_{c^*}^{(t)} & \mathbf{I}_N\\ \mathbf{I}_N & \mathbf{I}_N+  \mathbf{A}_{c^\ast}^{(t)}
	\end{bmatrix} \succeq \mathbf{0},\\
	&&& \textrm{constraint} \ \ \eqref{newconstraint}.
	\end{aligned}
	\end{equation}
	Problem \eqref{prob_06} is now convex and can be efficiently solved by CVX \cite{Boyd}. The procedure to obtain the optimal pilot signals for all $C$ cells in the network is summarized in Algorithm~\ref{alg2}.
{\it Remark 1 (Convergence):} Since problem~\eqref{prob_06} is convex, steps 3 and 4 in Algorithm~\ref{alg2} ensure the $\delta$-convergence of $\mathbf{X}_{c^*}^{(t)}$ to its optimal value and a minimal objective function value in problem~\eqref{prob_06} per cell.\footnote{ Although the global optimality can be achieved per iteration and per cell by solving \eqref{prob_06}, it may not be achievable to the original multicell problem~\eqref{prob_01} due to its inherent non-convexity. In fact, Algorithm~\ref{alg2} is a suboptimal algorithm with an affordable complexity to the NP-hard problem~\eqref{prob_01}.}
	\begin{algorithm}[t]
		\caption{Successive optimization approach for \eqref{prob_02}}
		\begin{algorithmic}[1]\label{alg2}
			\STATE {\bf Inputs:} $\mathbf{D}_{c,c^*}$, $P_{\max, c^\ast}$, $\sigma^2$, stopping criteria $\delta>0$, initialize $\mathbf{X}_{c}^{(0)}$, $\forall c, c^{\ast}$; $t=1$;
			\STATE Each cell $c^*$ calculates $\mathbf{F}_{c^*}^{(t-1)}$ utilizing \eqref{eqtF} and then solves \eqref{prob_06} to attain $\mathbf{X}_{c^*}^{(t)}$, $\forall c^*$; Exchange $\mathbf{X}_{c^*}^{(t)}$ with the other cells;
			\STATE {\bf If} $\sum_{c^{\ast} =1 }^{C} \|\mathbf{X}_{c^*}^{(t)} - \mathbf{X}_{c^*}^{(t-1)} \|_F \leq \delta$, {\bf then} Go to step 5;
			\STATE {\bf else if} {$\sum_{c^{\ast} =1 }^{C}  \|\mathbf{X}_{c^*}^{(t)}- \mathbf{X}_{c^*}^{(t-1)}\|_F>\delta$},
			{\bf then} $t=t+1$; Go to step 2;
			\STATE {\bf Outputs:} $\mathbf{X}_{c^*}^{\star}\leftarrow \mathbf{X}_{c^*}^{(t)}$, $\forall c^*$.
			
		\end{algorithmic}
	\end{algorithm}
	
	As the main computational complexity of Algorithm~\ref{alg2} is to solve \eqref{prob_06} at each BS, we analyze such complexity in the sequel. Since \eqref{prob_06} contains LMI constraints, a standard interior-point method (IPM) \cite{Boyd_convex} can be used to find its optimal solution. Therefore, we consider the worst-case runtime of the IPM to analyze the computational complexities of the proposed problem \eqref{prob_06} as follows.
	
	{\it Definition 1:} For a given $\epsilon >0$, the set of $\mathbf{X}_{c^*}^{(t),\epsilon},
	\mathbf{G}_{c^*}^{(t),\epsilon},\mathbf{A}_{c^*}^{(t),\epsilon}$ is called an $\epsilon$-solution to \eqref{prob_06} if
	\begin{equation}
	\textrm{Tr}\left( \mathbf{G}_{c^*}^{(t),\epsilon}\right)\leq \displaystyle \underset{ \mathbf{X}_{c^*}^{(t)},
		\mathbf{G}_{c^*}^{(t)},\mathbf{A}_{c^*}^{(t)}}{\textrm{minimize}}
	\textrm{Tr}\left(\mathbf{G}_{c^*}^{(t)}\right)+\epsilon.
	\end{equation}
	It can be observed that the number of decision variables of problem \eqref{prob_06} is on the order of $(2N+\tau)N$. Let $m=\mathcal{O}\left( (2N+\tau)N\right)$, we introduce the following lemma.
	\begin{lemma}
		\label{comlemma}
		The computational complexities to obtain $\epsilon$-solution to problem \eqref{prob_06} is
		\begin{eqnarray}
		\label{complexAI}
		\ln(\epsilon^{-1})\sqrt{4N+\tau} \alpha m,
		\end{eqnarray}
		where $\alpha=10N^3+(3\tau+6m)N^2+Nm\tau(m\tau+2)+\tau^2(m+\tau)+m^2.$
	\end{lemma}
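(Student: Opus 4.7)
The plan is to apply the textbook worst-case complexity bound for a short-step path-following interior-point method (IPM) to the LMI problem \eqref{prob_06}, following the treatment in Ben-Tal and Nemirovski. For an SDP with $m$ decision variables and LMI blocks of sizes $n_1,\ldots,n_K$, an $\epsilon$-solution is attained after $\mathcal{O}(\sqrt{\sum_j n_j}\,\ln(1/\epsilon))$ Newton iterations, and each iteration costs on the order of $m\sum_j n_j^3 + m^2\sum_j n_j^2 + m^3$ arithmetic operations (forming the gradient and Hessian of the log-barrier, plus one dense Newton solve). Both factors in \eqref{complexAI} can then be read off from the dimensions of \eqref{prob_06}.

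First I would enumerate the LMIs in \eqref{prob_06}: the matrix in the first constraint has order $N+\tau$, the matrix in the second constraint has order $2N$, and the implicit requirement $\mathbf{G}_{c^*}^{(t)}\succeq\mathbf{0}$ contributes an additional block of order $N$. The linear equality \eqref{newconstraint} is absorbed into the Newton system by elimination and does not enlarge the self-concordance parameter. Summing the block orders gives $(N+\tau)+2N+N=4N+\tau$, which produces the $\sqrt{4N+\tau}\,\ln(\epsilon^{-1})$ factor.

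Next I would count the decision variables. The pilot matrix $\mathbf{X}_{c^*}^{(t)}\in\mathbb{C}^{\tau\times N}$ contributes on the order of $\tau N$ degrees of freedom, while each of the Hermitian blocks $\mathbf{G}_{c^*}^{(t)}$ and $\mathbf{A}_{c^*}^{(t)}$ contributes on the order of $N^2$, so $m=\mathcal{O}((2N+\tau)N)$ as claimed. Substituting $n_1=N+\tau$, $n_2=2N$, $n_3=N$ into the per-iteration bound, expanding $(N+\tau)^3+(2N)^3+N^3$ and $(N+\tau)^2+(2N)^2+N^2$, and grouping coefficients of the monomials in $N$, $\tau$ and $m$, collapses the per-iteration cost into the form $\alpha\, m$ with $\alpha$ as displayed. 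Multiplying by the Newton-iteration count then yields \eqref{complexAI}.

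The main obstacle is purely bookkeeping. One must be consistent about whether complex entries are counted as pairs of real variables (this only affects a multiplicative constant absorbed by $\mathcal{O}(\cdot)$ in $m$), verify that the equality \eqref{newconstraint} can be handled by substitution rather than contributing a separate barrier term, and carry out the polynomial expansion cleanly enough that the coefficients match the grouping used to define $\alpha$. No new idea is required beyond specialising the standard IPM complexity theorem to the LMI block structure of \eqref{prob_06}.
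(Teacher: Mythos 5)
Your proposal follows essentially the same route as the paper: the paper likewise enumerates one LMI of dimension $N+\tau$, one of dimension $2N$, and one of dimension $N$, sums these to get the iteration count $\sqrt{4N+\tau}\,\ln(\epsilon^{-1})$, counts $m=\mathcal{O}((2N+\tau)N)$ decision variables exactly as you do, and then invokes the standard worst-case interior-point bound (via the cited reference) for the per-iteration cost $\alpha m$. The only caveat is that honestly expanding your per-iteration formula $m\sum_j n_j^3+m^2\sum_j n_j^2+m^3$ with $n_1=N+\tau$, $n_2=2N$, $n_3=N$ produces the term $N\tau(3\tau+2m)$ rather than the paper's $Nm\tau(m\tau+2)$, a discrepancy that appears to lie in the paper's printed expression for $\alpha$ rather than in your method.
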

	\begin{proof}
		Problem \eqref{prob_06} has  $1$ LMI constraint of dimension $N+\tau$, $1$ LMI constraint of dimension $2N$, and $1$ LMI constraint of dimension $N$. Based on these observations, one can follow the same steps as in \cite[Section V-A]{Kun-Yu2014} to arrive at \eqref{complexAI}. Note that the term $\ln(\epsilon^{-1})\sqrt{4N+\tau}$ in \eqref{complexAI} is the iteration complexities \cite{Kun-Yu2014} required for obtaining $\epsilon$-solutions to problem \eqref{prob_06} while the remaining terms represent the per-iteration computation costs \cite{Kun-Yu2014}.
	\end{proof}
	
	\section{Simulation Results}
	
	A wrapped-around multi-cell Massive MIMO system is considered for simulations with $C=4$, $M=500,$ and $N=10$. All users are randomly distributed over the coverage area. However, the distance between any user $n$ of cell $c$ and BS $c^{\ast}$, denoted as $d_{c,c^\ast}^n$, is always satisfied $d_{c,c^\ast}^n \geq 0.035$ km. The system utilizes $20$ MHz bandwidth related to the noise variance of $-96$ dBm and the noise figure of $5$ dB. The large-scale fading coefficient $\phi_{c,c\ast}^n$ [dB] is modeled as	$\phi_{c,c^\ast}^n = -148.1 - 37.6 \log_{10} (d_{c,c^\ast}^n) + z_{c,c^\ast}^n,$ where $z_{c,c^\ast}^n$ is the shadow fading following a log-normal Gaussian distribution with the standard variation of $7$ dB. Monte-Carlo simulations are tackled over $200$ different realizations of user locations. The widely adopted orthogonal pilot design, e.g., \cite{You2015a,Xu2015a},  is used as a benchmark where each pilot symbol is allocated with power $200$ mW and those orthogonal pilots are reused amongst users in the network. For every realization of user locations, such pilot signals are generated by the eigenvectors of a uniformly generated random matrix. The power constraint for pilot signal is set to be $P_{\textrm{max},c}=200\tau$ mW, $\forall c$.
	\begin{figure}[t]
		\centering
		\includegraphics[width=.4\textwidth]{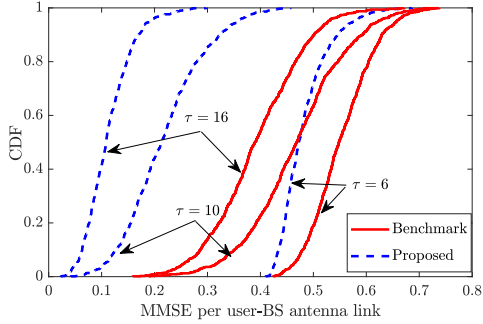} 
		\caption{The CDFs of the MMSE per user-BS antenna link for the proposed and benchmark approaches.}
		\label{Fig-CDFMMSE}
		\vspace*{-0.3cm}
	\end{figure}
	\begin{figure}[t]
		\centering
		\includegraphics[width=.4\textwidth]{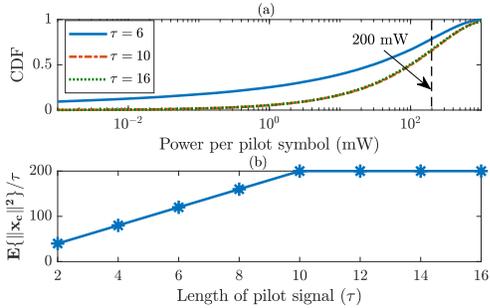} 
		\caption{$(a)$ The CDFs of the power allocated for each pilot symbol of the proposed approach with different pilot lengths; $(b)$ The average power allocated for each pilot symbol of the proposed approach v.s. the pilot length.}
		\label{Fig-Power}
		\vspace*{-0.3cm}
	\end{figure}
	\begin{figure}[t]
		\centering
		\includegraphics[width=.4\textwidth]{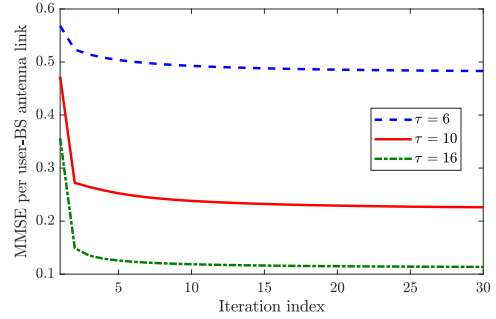} \vspace*{-0.2cm}
		\caption{The convergence of MMSE per user-BS antenna link v.s. iteration index for the proposed approach.}
		\label{Fig-Convergence}
		\vspace*{-0.3cm}
	\end{figure}
	
	Fig.~\ref{Fig-CDFMMSE} shows the cumulative distribution function (CDF) of the BS-user antenna link which is defined by $f_{c^\ast} (\mathbf{X}_{c^\ast})/(MN)$. It is clear from the figure that the channel estimation accuracy of the proposed approach is significantly improved compared to that of the benchmark. This confirms the effectiveness of our optimal pilot design in combating pilot contaminations. The results also indicate that the performance gap between the proposed approach and the benchmark increases as the pilot length increases. This is because increasing pilot length gives more degrees of freedom to the proposed approach for optimizing its performances.\footnote{The MMSE of the system approaches zero when the pilot length goes to infinity. An ideal pilot length of $\tau  = NC$ is sufficient to distinguish all users in the network, and also to balance between channel estimation errors and spectral efficiency (SE). However, this is impractical for a large-scale network. To that end, the proposed approach offers significant MMSE improvements for practical pilot lengths, i.e., when $\tau  < NC$.}
	
	Fig.~\ref{Fig-Power}~(a) displays the CDFs of the power allocated to each pilot symbol with different pilot lengths. It can be seen from the figure that for most of the cases, e.g., around $80 \ \%$ for $\tau=6$ and $70 \ \%$ for $\tau=10$ and $\tau=16$, the proposed approach spends less power for each symbol than the benchmark does, i.e., less than $200$ mW per symbol. Fig.~\ref{Fig-Power}~(b) illustrates the average power for each pilot symbol against the length of pilot signals. When $\tau < N=10$, the proposed approach consumes less power than the benchmark does and the system suffers from both intra-cell and inter-cell interference. In such hostile situations, the proposed approach can still effectively handle the pilot contaminations which results in higher channel estimate accuracy as seen in Fig.~\ref{Fig-CDFMMSE}, while consuming less power than the benchmark. Interestingly, when $\tau \geq N$, the optimal transmit power spent on each symbol turns out to be $200$ mW and to be constant irrespective to the value of $\tau$, which can also be observed from the CDF shown in Fig.~\ref{Fig-Power}~(a).

Fig.~\ref{Fig-Convergence} numerically reveals the fast convergent speed of the proposed approach, i.e., within less than 20 iterations. This results confirm the statement in Remark 1. Finally, Table~\ref{Table1} demonstrates the average uplink SE of per user in the network using the use-and-then-forget capacity bounding technique \cite[eq.~(28)]{Chien2018a} with maximum ratio detection, fixed data power $200$ mW, and the coherence block length $200$ symbols. Thanks to minimizing MSE, the proposed approach attains a higher uplink SE than the benchmark does.
	\begin{table}
		\caption{Average uplink SE of each user in the network [bits/s/Hz] for different pilot lengths.}
		\label{Table1}
		\centering
		\begin{tabular}{|c|c|c|c|}
			\hline
			& $\tau =6$ & $\tau = 10$& $\tau = 16$\\
			\hline
			Benchmark & 1.19 & 1.49 & 1.55 \\
			\hline
			Proposed & 1.39 & 1.74 & 1.86\\
			\hline
			Gain & 16.81\% & 16.78\% & 20.00\%\\
			\hline
		\end{tabular}
		\vspace{-0.5cm}
	\end{table}

\end{document}